\relax
%File: formatting-instruction.tex
\documentclass[letterpaper]{article} % DO NOT CHANGE THIS
\usepackage{aaai19}  % DO NOT CHANGE THIS
\usepackage{times}  % DO NOT CHANGE THIS
\usepackage{helvet} % DO NOT CHANGE THIS
\usepackage{courier}  % DO NOT CHANGE THIS
\usepackage[hyphens]{url}  % DO NOT CHANGE THIS
\usepackage{graphicx} % DO NOT CHANGE THIS
\urlstyle{rm} % DO NOT CHANGE THIS
  % DO NOT CHANGE THIS
\usepackage{graphicx}  % DO NOT CHANGE THIS
\frenchspacing  % DO NOT CHANGE THIS
\setlength{\pdfpagewidth}{8.5in}  % DO NOT CHANGE THIS
\setlength{\pdfpageheight}{11in}  % DO NOT CHANGE THIS

%PDF Info Is REQUIRED.
% For /Author, add all authors within the parentheses, separated by commas. No accents or commands.
% For /Title, add Title in Mixed Case. No accents or commands. Retain the parentheses.
 \pdfinfo{
/Title (Causal Mediation Analysis with Multiple Treatments and Latent Confounders)
%/Author (AAAI Press Staff, Pater Patel Schneider, Sunil Issar, J. Scott Penberthy, George Ferguson, Hans Guesgen)
} %Leave this	

\setcounter{secnumdepth}{0} %May be changed to 1 or 2 if section numbers are desired.

% The file aaai19.sty is the style file for AAAI Press
% proceedings, working notes, and technical reports.
%
\setlength\titlebox{2.5in} % If your paper contains an overfull \vbox too high warning at the beginning of the document, use this
% command to correct it. You may not alter the value below 2.5 in
\title{Causal Mediation Analysis with Multiple Treatments and Latent Confounders }
%Your title must be in mixed case, not sentence case.
% That means all verbs (including short verbs like be, is, using,and go),
% nouns, adverbs, adjectives should be capitalized, including both words in hyphenated terms, while
% articles, conjunctions, and prepositions are lower case unless they
% directly follow a colon or long dash

\author{Wei Li\textsuperscript{\rm 1, 2}, Chunchen Liu\textsuperscript{\rm 2}, Zhi Geng\textsuperscript{\rm 1}, John Murray\textsuperscript{3}\\
%Written by AAAI Press Staff\textsuperscript{\rm 1}\thanks{Primarily Mike Hamilton of the Live Oak Press, LLC, with help from the AAAI Publications Committee}\\ \Large \textbf{AAAI Style Contributions by
%Pater Patel Schneider,} \\ \Large \textbf{Sunil Issar, J. Scott Penberthy, George Ferguson, Hans Guesgen}\\ % All authors must be in the same font size and format. Use \Large and \textbf to achieve this result when breaking a line
\textsuperscript{\rm 1}School of Mathematical Sciences, Peking University, Beijing, China\\
\textsuperscript{\rm 2} Department of Data Mining, NEC Laboratory, China\\
\textsuperscript{\rm 3} Department of Computer Science, San Jose State University, USA
%If you have multiple authors and multiple affiliations
% use superscripts in text and roman font to identify them. For example, Sunil Issar,\textsuperscript{\rm 2} J. Scott Penberthy\textsuperscript{\rm 3} George Ferguson,\textsuperscript{\rm 4} Hans Guesgen\textsuperscript{\rm 5}. Note that the comma should be placed BEFORE the superscript for optimum readability
%2275 East Bayshore Road, Suite 160\\
%Palo Alto, California 94303\\
%publications19@aaai.org % email address must be in roman text type, not monospace or sans serif
}

\usepackage{amsmath}
\usepackage{graphicx,amssymb,booktabs,multirow}
\usepackage{anysize}
\usepackage{bm}
\usepackage{natbib,amsmath}
\usepackage{graphicx}
\usepackage{fancyhdr}
\usepackage{rotating}
%\marginsize{1in}{1in}{1in}{1in}
%\setlength{\parindent}{0.25in}

\makeatletter
\newcommand*{\indep}{%
\mathbin{%
\mathpalette{\@indep}{}%
}%
}
\newcommand*{\nindep}{%
\mathbin{% % The final symbol is a binary math operator
\mathpalette{\@indep}{\not}% \mathpalette helps for the adaptation
% of the symbol to the different math styles.
}%
}
\newcommand*{\@indep}[2]{%
% #1: math style
% #2: empty or \not
\sbox0{$#1\perp\m@th$}% box 0 contains \perp symbol
\sbox2{$#1=$}% box 2 for the height of =
\sbox4{$#1\vcenter{}$}% box 4 for the height of the math axis
\rlap{\copy0}% first \perp
\dimen@=\dimexpr\ht2-\ht4-.2pt\relax
% The equals symbol is centered around the math axis.
% The following equations are used to calculate the
% right shift of the second \perp:
% [1] ht(equals) - ht(math_axis) = line_width + 0.5 gap
% [2] right_shift(second_perp) = line_width + gap
% The line width is approximated by the default line width of 0.4pt
\kern\dimen@
{#2}%
% {\not} in case of \nindep;
% the braces convert the relational symbol \not to an ordinary
% math object without additional horizontal spacing.
\kern\dimen@
\copy0 % second \perp
}
\makeatother
%\linespread{1.3}
\usepackage{color}
\usepackage{amsthm}

\makeatother
\newtheorem{theorem}{Theorem}

\newtheorem{condition}{Condition}
\usepackage{bbm}
\newcommand{\argmin}{\arg\!\min}

\newcommand{\E}{\textnormal{E}}

\usepackage{caption}
\usepackage{subcaption}

\def\E{\textnormal{E}}

\usepackage[all]{xy}
\usepackage{bm}
\usepackage{pifont}
\usepackage{stmaryrd}
\usepackage{pgf,tikz}
 \usepackage{tikz}
 \usetikzlibrary{shapes,arrows,positioning,calc}

\newenvironment{sequation*}{\begin{equation*}\small}{\end{equation*}}

\newenvironment{tequation*}{\begin{equation*}\tiny}{\end{equation*}}

\usepackage{mathrsfs}

\providecommand{\customgenericname}{}
\newcommand{\newcustomtheorem}[2]{%
  \newenvironment{#1}[1]
  {%
   \renewcommand\customgenericname{#2}%
   \renewcommand\theinnercustomgeneric{##1}%
   \innercustomgeneric
  }
  {\endinnercustomgeneric}
}

\newcustomtheorem{customthm}{Theorem}
\newcustomtheorem{customlemma}{Lemma}
\newcustomtheorem{customassumption}{Assumption}
\tikzstyle{sum} = [draw, circle, minimum size=0pt, inner sep = 1.7pt]

\makeatletter
\newcommand{\mathleft}{\@fleqntrue\@mathmargin0pt}
\newcommand{\mathcenter}{\@fleqnfalse}
\makeatother

% packages for table
\usepackage{booktabs,caption,fixltx2e}
\usepackage[flushleft]{threeparttable}
\allowdisplaybreaks

\def\bZ{\mathbf{Z}}
\def\bz{\mathbf{z}}
\def\E{\textnormal{E}}

 \begin{document}

\maketitle

\begin{abstract}
Causal mediation analysis is used to evaluate direct and indirect causal effects of a treatment on an outcome of interest
  through an intermediate variable or a mediator.
  It is difficult to identify the direct and indirect causal effects because the mediator cannot be randomly assigned in many real applications.
  In this article, we consider a causal model including latent confounders between the mediator and the outcome. We present sufficient conditions for identifying the direct and indirect effects and propose an approach for estimating them. The performance of the proposed approach is evaluated by simulation studies. Finally, we apply the approach to a data set of the customer loyalty survey by a telecom company.
\end{abstract}

\section{Introduction}

Randomized experiments are typically seen as a gold standard for evaluating the causal effect of a treatment on an outcome. Although the estimation of causal effect allows researchers to examine whether the treatment causally affects the outcome, it provides only a black-box view of causality and cannot tell us how and why such an effect occurs.
Mediation analysis  seeks to open up the black box and helps us to understand how the treatment impacts the outcome. In particular,
mediation analysis is an important tool
for evaluating direct and indirect causal effects of the treatment on the outcome through an intermediate variable or a mediator.
A traditional approach to mediation analysis, which was commonly used in social psychological research \citep{baron1986moderator,mackinnon2007mediation}, involves three regression models:
a regression equation of mediator on treatment, a regression equation of outcome on both treatment and mediator, and a regression model of treatment on outcome. This regression-based traditional approach  is often known as the linear structural equation modeling (LSEM) method.
Usually, the LSEM framework does not consider latent confounders which affect both the mediator
and the outcome in the models.
However, in real applications, the mediator cannot be randomly assigned to individuals,
and there may exist latent variables confounding the mediator-outcome relationship. The presence of such latent confounders often induces the non-identifiability of
the direct and indirect causal effects of treatment on outcome. Apart from this, another drawback of the LSEM framework is that it cannot offer a general definition of these causal effects that are applicable beyond specific statistical models. Alternatively, a large number of scholars adopted the potential outcome framework to define the direct and indirect causal effects
 in causal mediation analysis \citep{robins1992identifiability,vanderweele2009marginal,imai2010identification,vanderweele2013mediation,li2017identifiability}.

Causal mediation analysis distinguishes between natural and controlled effects which are defined for different purposes  \citep{pearl2001direct}. For example, the natural direct effect captures the effect of the treatment when one intervenes to set the mediator to its naturally occurring level, while the controlled direct effect arises after intervening the mediator to a fixed level, which is particularly relevant
for policy making and requires that both the treatment and mediator can be directly manipulated. The natural direct and indirect effects are more useful for understanding
the underlying mechanism by which the treatment operates. This is because
the total causal effect can be decomposed into the sum of these two natural effects.

The identifiability of direct and indirect causal effects requires the sequential ignorability assumption \citep{imai2010general}
 or some other similar assumptions \citep{pearl2001direct,vanderweele2009conceptual}.
The sequential ignorability assumption means that the treatment is randomly assigned and the mediator is also randomly assigned conditional on the assigned treatment and the measured covariates. Under the sequential ignorability assumption, the parameters in the LSEM approach can also have causal interpretations. However, this assumption is too stringent and  may not hold even in randomized experiments. It is because that there may exist some latent confounders between the mediator and outcome variables.
For example, blood pressure as a mediator between treatment and heart disease
cannot be randomly assigned to patients,
and there may be latent confounders (e.g., diets, habits and genes) affecting both blood pressure and heart disease.
%In our real customer loyalty data, we aim to  discover key factors affecting loyalty and figure our whether the effects of the factors on loyalty are mediated by the general satisfaction about the company. This real study does not randomize customer's general satisfaction and there may exist unmeasured confounders of the mediator-outcome relationship.
To address such latent confounding problems, one possible way is to perform
a sensitivity analysis to evaluate how sensitive the result is to the violation of the sequential ignorability assumption \citep{vanderweele2010bias,imai2010general,li2017identifiability}.
In order to obtain identifiability results for the case with latent confounders,
 \cite{ten2007causal} proposed a linear rank preserving model approach for assessment of causal mediation effects, but they made some no-interaction assumptions
which are untestable.
Following this line, \cite{zheng2015causal} extended this model to a more general setting but still required some other untestable assumptions.
Another way of dealing with latent confounders is to use baseline covariates interacted with the random treatment assignment as an instrumental variable.
Specifically, it assumes that there exists a baseline covariate which interacts with the treatment in predicting the mediator but is not predictive to the outcome \citep{dunn2007modelling,albert2008mediation,small2012mediation}.
% \cite{rubin2004direct} proposed an alternative approach to causal mediation analysis based on principal stratification, followed by a large number of scholars \cite{frangakis2002principal,gallop2009mediation,elliott2010bayesian,forastiere2016identification}. However, as pointed out by  \cite{vanderweele2008simple}, the application of this approach is only limited to the cases with a discrete (often binary) mediator, and it cannot characterize the indirect causal effect of treatment on outcome through mediator.

In this article, we focus on the identification and estimation of natural direct and indirect causal effects in causal mediation analysis. We propose an approach
 to dealing with multiple and correlated treatments for causal mediation analysis.
 We give the formal definitions of causal mediation effects for each treatment while accounting for possible correlations with other treatments. Besides,
 our approach can also be applicable to the case with latent confounders between mediator and outcome variables.
 We allow for interactions between treatments and covariates in both mediator and outcome models, and we utilize the information from multiple treatments to identify direct and indirect causal effects and obtain consistent estimates of direct and indirect effects.

%The remainder of this article is organized as follows. In Section \ref{sec:preliminaries}, we present the formal definitions of causal parameters of interest and the assumptions commonly made in causal mediation analysis. In Section \ref{sec:methods}, we show the identifiability of causal mediation effects in the presence of latent confounders, and we propose an approach for estimation and inference. We evaluate the performance of our approach via simulation studies, and we apply our approach to a real data set in Section \ref{sec:experiments}. We conclude with a discussion in Section \ref{sec:conclusion}.

\section{Preliminaries}\label{sec:preliminaries}

\subsection{Observed random variables}
Let $\bZ_i$ denote the observed treatments assigned to individual $i$
which is a vector with $J \geq 2$ treatment variables, i.e., $\bZ_i=(Z_{i1}, Z_{i2},\ldots, Z_{iJ})^\top$. Let $Y_i$ denote the observed outcome for individual $i$ and $M_i$ denote some observed intermediate variable on the causal path from the treatments to the outcome.
To streamline notation of the random variables, we suppress subscript $i$ for individual below.

The components of $\bZ$ can be correlated with each other through an unobserved common cause of them.
For each $j=1,\ldots,J$, assume that $\Pr(Z_{j}=z_j)>0$ for any $z_j$th treatment level with $z_j\in\{1,\ldots,K_j\}$. Both $Y$ and $M$ are assumed to be continuous and there may exist a latent confounder $U$ confounding the relationship between these two variables. In general, we may also consider $M$ as a vector of mediator variables and our results in this article can be straightforwardly generalized from the setting of a single mediator to the setting of multiple mediators.

\subsection{Potential outcomes and assumptions}

To formally define causal effects in causal mediation analysis, we make use of the concept of potential outcomes. Potential outcomes present the values of a outcome variable for each individual under varying levels of a treatment variable. We can observe only one of these potential outcomes but can never observe all of them because it is impossible for us to unwind time and go back and manipulate the individual to other treatment levels.

We first make the stable unit treatment value assumption (SUTVA). This assumption requires that the value of the outcome should not be affected by the manner of manipulations providing the same value for the treatment variable, that is,
there is only one version of the potential outcomes and there is no interference between individuals \cite{rubin1980comment}. The SUTVA allows us to uniquely define the potential values for the mediator $M(\bz)$ and the potential outcome $Y(\bz)$ if an individual were to receive treatment $\bZ=\bz$. Let $Y(\bz,m)$ denote the potential outcome for an individual that would occur if the treatment $\bZ$ were set to level $\bz$, and if the mediator $M$ were manipulated to level $m$. In contrast, let $Y(\bz,M(\bz^*))$ denote the potential outcome for an individual, where we do not specify the actual level of $M$, but set it to what it would have been if treatment had been $\bZ=\bz^*$. To connect the observed random variables with corresponding potential outcomes, we also make the consistency assumption, namely that $M(\bz)=M$, $Y(\bz) = Y$ if $\bZ=\bz$, and $Y(\bz,m)=Y$ if $\bZ=\bz$, $M=m$. According to this assumption, we note that  the observed outcome $Y$ is just one realization of the potential outcome $Y(\bz,m)$ with observed treatment level $\bZ = \bz$ and mediator level $M=m$.

We also assume that the underlying causal model corresponds to a {\it directed acyclic graph} (DAG, \cite{pearl2000causality}). The DAG is a useful tool for visual representations of qualitative causal relationships between the variables of interest. We show the corresponding DAG of this context in Figure \ref{fig:dag}.  Note that there are no causal relationships between the treatment variables in $\bZ$, but they may be associated with each other through some unobserved variable.
We use $C$ to denote this unobserved common cause of the treatment variables in $\bZ$, and it is assumed to be independent of the other variables  conditional on $\bZ$. When the treatment variables in $\bZ$ are randomly assigned,
$C$ is an empty set, and this assumption automatically holds.
The symbol `{\scalebox{0.7}{$\vdots$}}' in the DAG represents other undisplayed treatment variables $Z_j$'s,
each of which has the directed edges $C \rightarrow Z_j$, $Z_j \rightarrow M$, and $Z_j \rightarrow Y$.
\begin{figure}[h]
%\vspace{.3in}
\begin{center}
\begin{tikzpicture}
% nodes %
\node[sum, text centered] (c) {$C$};
\node[above right = 1.5 of c, text centered] (z1) {$Z_1$};
\node[right = 1.1 of c, text centered] (d) {$\vdots$};
\node[below right = 1.5 of c, text centered] (j) {$Z_J$};
\node[below right = 1.5 of z1, text centered] (m) {$M$};
\node[right=2.4 of m, text centered] (y) {$Y$};
\node[sum, above = of $(m)!0.2!(y)$, text centered] (u) {$U$};
%\node[below = 1 of y, text centered] (v) {$V$};
%\node[draw, rectangle, dashed, above = 1 of t, text centered] (u) {$U$};

% edges %
\draw[->, line width= 0.5] (c) --  (z1);
%\draw[->, line width= 0.5] (c) --  (d);
\draw[->, line width= 0.5] (c) --  (j);
\draw[->, line width= 0.5] (z1) --  (m);
\draw[->, line width= 0.5] (j) --  (m);
\draw[->, line width= 0.5] (z1) to  [out=90,in=90, looseness=0.5]  (y);
\draw[->, line width= 0.5] (j) to  [out=270,in=270, looseness=0.5]  (y);
\draw [->, line width= 0.5] (m) -- (y);
\draw [->, line width= 0.5] (u) -- (m);
\draw [->, line width= 0.5] (u) -- (y);
%\draw [->, draw, line width = 0.5, dashed] (v) -- (m);
%\draw[->,  line width= 0.5] (v) --node {{\large \rotatebox{150}{\ding{53}}}} (r);
%%\draw[->,line width= 1] (u) --(t);
%%\draw[->,line width= 1] (u) -- (y);
%\draw[->,  line width=0.5] (z) to  [out=90,in=90, looseness=0.5]  (y);
%\draw[->, draw, line width = 0.5, dashed] (m) to [out=90,in=90, looseness=0.5] (r);
%\draw[->, draw, line width = 0.5, dashed] (z) to [out=120,in=60, looseness=0.8] (r);
\end{tikzpicture}
 \end{center}
 %\vspace{.1in}
\caption{ \label{fig:dag}
A DAG depicts the causal relationships for mediation analysis,
where the variables in circle are unobserved.
}
\end{figure}
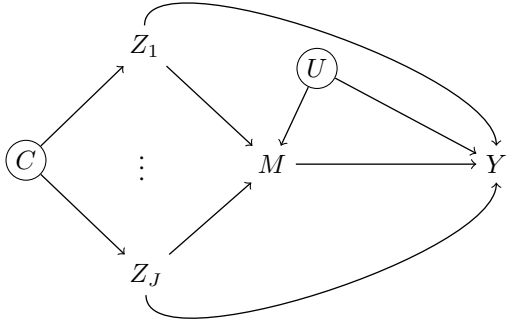

\subsection{Definitions of direct and indirect effects}

Using the nested potential outcomes notation, we can define the causal parameters of interest in this multiple-treatment model. We first describe the definition of the average causal effect in a single-treatment setting. We use the difference $\E\{Y(z)-Y(z^*)\}$ to represent the average causal effect of treatment level $z$ versus treatment level $z^*$.
We now extend this definition to the setting with multiple treatment variables. For notational simplicity, we let $\bZ_{-j}=(Z_1,\ldots,Z_{j-1},Z_{j+1},\ldots,Z_J)$ and $\bz_{-j}=(z_1,\ldots,z_{j-1},z_{j+1},\ldots,z_J)$. Then we
let $CTE(z_j,z_j^*\mid \bz_{-j})$ denote the conditional  total causal effect of $Z_j$ on $Y$ under two differing levels $z_j$ and $z_j^*$ of $Z_j$ while conditioning on the interventions $\bz_{-j}$ for the other treatments. The formulation is given as follows:
%\vspace{-2em}
\begin{equation*}
  \begin{aligned}
    CTE(z_j,z_j^*\mid \bz_{-j})
    =\E\{Y(z_{j},\bz_{-j})\}
   -\E\{Y(z_j^*, \bz_{-j})\}.
    \vspace{-1em}
  \end{aligned}
\end{equation*}

In addition, we can also define  the average
total causal effect, $TE(z_j,z_j^*)$, which is free of other
treatment variables, by taking expectation of $CTE(z_j,z_j^*\mid \bZ_{-j})$ with respect to $\bZ_{-j}$, i.e.,
\begin{equation*}
  \begin{aligned}
    TE(z_j,z_j^*)=\E\{CTE(z_j,z_j^*\mid \bZ_{-j})\}.
  \end{aligned}
\end{equation*}

We next define the  conditional  natural direct and indirect effects of $Z_j$ on $Y$. Let $CNDE(z_j,z_j^*\mid \bz_{-j})$ denote the conditional  natural direct effect of  $Z_j$
under two different levels $z_j$ and $z_j^*$ while setting the other treatments to $\bz_{-j}$ and setting $M$ to the values attained under fixed treatment levels $\bz$.
We use $CNIE(z_j,z_j^*\mid \bz_{-j})$ to denote the conditional natural indirect effect which is defined as the difference between two averaged potential outcomes with treatments set to $z_j^*,\bz_{-j}$ and the mediator $M$ set to the values attained under differing treatment levels $z_j$ and $z_j^*$ for $Z_j$ and $\bz_{-j}$ for others.
Below we formally give their definitions:
\begin{equation*}
  \begin{aligned}
   CNDE(z_j,z_j^*\mid \bz_{-j})
     =& \E\{Y(z_j,\bz_{-j},M(\bz))\}
       \\&~-
    \E\{Y(z_j^*,\bz_{-j},M(\bz))\},\\
  CNIE(z_j,z_j^*\mid \bz_{-j})
     = &\E\{Y(z_j^*,\bz_{-j},M(z_j,\bz_{-j}))\}
      \\&~-
    \E\{Y(z_j^*,\bz_{-j},
    M(z_j^*,\bz_{-j}))\}.
%    \text{and}\\
%    &NIE(z_j^*,z_j)\\
%    =&~\E\{NIE(Z_1,\ldots,Z_{j-1},z_j^*,z_j,Z_{j+1},
%    \ldots,Z_J)\}.
  \end{aligned}
\end{equation*}
Similarly, we also give the the definitions of the average natural direct and indirect effects of $Z_j$ on $Y$ which do not depend on other treatment variables:
\begin{equation*}
  \begin{aligned}
    NDE(z_j,z_j^*)
    =~&\E\{CNDE(z_j,z_j^*\mid \bZ_{-j})\},\\
    %\text{and~~~~~~~~~~~~~~~}\\
    NIE(z_j,z_j^*)
    =~&\E\{CNIE(z_j,z_j^*\mid \bZ_{-j})\}.
  \end{aligned}
\end{equation*}
Under the composition assumption \citep{pearl2000causality} that $Y(\bz)=Y(\bz,M(\bz))$, we immediately have the following decompositions:
\begin{equation*}
  \begin{aligned}
    CTE(z_j,z_j^*\mid \bz_{-j})
    =& CNDE(z_j,z_j^*\mid \bz_{-j})
     \\&~+CNIE(z_j,z_j^*\mid \bz_{-j}),\\
    TE(z_j,z_j^*)=&NDE(z_j,z_j^*)+NIE(z_j,z_j^*).
  \end{aligned}
\end{equation*}

\section{Methods}\label{sec:methods}
In this section, we first study the identification of the direct and indirect causal effects for some commonly-used models. We then provide an approach for estimation and also discuss inference procedures via  resampling techniques.

\subsection{Causal models}

We use potential outcomes notation to construct causal models that allow us to directly specify the causal effects of interest as functions of parameters in the models. We
consider the following causal models  including a latent confounder $U$ for potential outcomes, $M(\bz)$ and $Y(\bz,m)$, respectively:
\begin{eqnarray}
  \begin{aligned}
    M(\bz) &= g_M^c(\bz) + U + \epsilon(\bz),\\
    Y(\bz,m)&=g_Y^c(\bz) + \beta^c m + h(U) + \eta(\bz,m)\label{eqn:causalmodel},
  \end{aligned}
\end{eqnarray}
where $\E(U)=\E\{h(U)\}=0$, $g_M^c(\cdot)$, $g_Y^c(\cdot)$ and $h(\cdot)$ are unknown functions. Note that the causal models in~\eqref{eqn:causalmodel} allow the individual natural direct, indirect and total causal effects vary individual by individual. Similar models were also proposed in  \cite{lindquist2012functional}.
Here we use the superscript `$c$' to indicate causal parameters in models~\eqref{eqn:causalmodel} for potential outcomes to distinguish the parameters in SEMs for  observed variables.
%{\red The causal models in~\eqref{eqn:causalmodel} allow the individual natural direct, indirect and total causal effects vary individual by individual, which is less restrictive than that imposed in structural equation models.}

We assume that $\E\{\epsilon(\bz)\}=0$ and $\E\{\eta(\bz,m)\}=0$
and that $\epsilon(\bz)$ and $\eta(\bz,m)$ are mutually independent and are also independent of $(\bZ,M,U)$ for all values of $\bz$ and the pair $(\bz,m)$. Then we can obtain $\E\{\eta(\bz,M(\bz^*))\}=0$ for all values of $\bz$ and $\bz^*$. With these conditions,
 we can directly write the average natural direct, indirect and total causal effects of $Z_j$ on $Y$ respectively as
\begin{equation}\label{eqn:causaleffectsofinterest}
  \begin{aligned}
   NDE(z_j,z_j^*)&=\E\{g_Y^c(z_j,\bZ_{-j})\}
     - \E\{g_Y^c(z_j^*,\bZ_{-j})\},\\
    NIE(z_j,z_j^*)&=\beta^c\big[\E\{g_M^c(z_j,\bZ_{-j})\}
     - \E\{g_M^c(z_j^*,\bZ_{-j})\}\big],\\
    %\text{and~~~~~~~~}\\
    TE(z_j,z_j^*)&=NDE(z_j,z_j^*)+NIE(z_j,z_j^*).
    \end{aligned}
\end{equation}
The identifiability of these causal effects relies on the identifiability of $g_M^c(\cdot)$, $g_Y^c(\cdot)$ and $\beta^c$. These parameters encode the causal relationships in models \eqref{eqn:causalmodel} for the potential outcomes.
Since the variables in the models are not observed,
these parameters cannot be estimated by the ordinary approaches for the models with observed variables. Hence, additional assumptions or (and) more feasible models are required to  identify and estimate these parameters.

\subsection{SEMs for observed variables}

According to the path diagram in Figure \ref{fig:dag}, we consider the SEMs for observed variables and a latent confounder $U$ as follows:
\begin{equation}
\begin{aligned}
   M(\bZ) &= g_M^s(\bZ) + U + \epsilon,\\
    Y(\bZ,M)&=g_Y^s(\bZ) + \beta^s M + h(U) + \eta\label{eqn:SEM},
    \end{aligned}
\end{equation}
where $\E(\epsilon\mid \bZ,U)=0$, $\E(\eta\mid \bZ,M,U)=0$,
$g_M^s(\cdot)$ and $g_Y^s(\cdot)$ are unknown functions, and the variables $M(\bZ)$ and $Y(\bZ, M)$ are observed because of the consistency assumption: $M(\bZ)=M$ and $Y(\bZ,M)=Y$.
The SEMs are built upon observed variables,
which are different from those for the potential outcomes defined in the previous subsection.
The superscript `$s$' denotes the parameters occurring in the SEMs,
and these parameters are in principle estimable from observed data.

The SEMs in~\eqref{eqn:SEM} assumes constant individual natural direct, indirect and total causal effects, which is more restrictive than that imposed in the causal models~\eqref{eqn:causalmodel}. However, we are interested in the average versions of these causal effects, and the average causal effects can be expressed as causal parameters $\{g_M^c(\cdot)$, $g_Y^c(\cdot)$, $\beta^c\}$ in models~\eqref{eqn:causalmodel} according to~\eqref{eqn:causaleffectsofinterest}. In general, the parameters $\{g_M^s(\cdot),g_Y^s(\cdot),\beta^s\}$ in the SEMs~\eqref{eqn:SEM}  are not equal to their counterparts
$\{g_M^c(\cdot)$, $g_Y^c(\cdot)$, $\beta^c\}$ in the causal models. However,
under the assumptions encoded by the DAG in Figure \ref{fig:dag},
we can establish the equality between these two sets of parameters.

\subsection{Equivalence between parameters in causal models and SEMs}
Under the models in~\eqref{eqn:SEM}, we can write the parameters $\{g_M^s(\cdot),g_Y^s(\cdot),\beta^s\}$ as
\begin{equation*}
  \begin{aligned}
    g_M^s(\bz) = \E\{M(\bZ)\mid \bZ =\bz\},~~~~~~~~~~~~~~~~~~~~~\qquad\qquad\qquad&\\
    \beta^s=\frac{1}{m-m^*}\big[\E\{Y(\bZ,M(\bZ))\mid M(\bZ)=m,\bZ=\bz,U\}&\\
    -\E\{Y(\bZ,M(\bZ))\mid M(\bZ)=m^*,\bZ=\bz,U\}\big],\qquad&\\
    g_Y^s(\bz) =\E\{Y(\bZ,M(\bZ))\mid M(\bZ)=m,\bZ=\bz, U\}\qquad\quad&\\
    -\beta^s m-h(U).\qquad\qquad\qquad\qquad\qquad\qquad\quad&
  \end{aligned}
\end{equation*}

Noting first from the DAG that $M(\bz)\indep \bZ$ and under the assumption that $\E(U+\epsilon\mid \bZ=\bz)=0$
for any value of $\bz$, we immediately have the following result:
\begin{equation*}
  \begin{aligned}
    g_M^s(\bz) = \E\{M(\bZ)\mid \bZ =\bz\}=\E\{M(\bz)\}=g_M^c(\bz).
  \end{aligned}
\end{equation*}
We also note that  the potential outcomes for the mediator and outcome are conditionally independent given the latent confounder $U$, i.e., $Y(\bz,m)\indep M(\bz)\mid U$ for any value of the pair $(\bz,m)$. Combining this with the ignorable treatment assignment assumption, we have
\begin{equation*}
  \begin{aligned}
    &\E\{Y(\bZ,M(\bZ))\mid M(\bZ)=m,\bZ=\bz,U\}\\
    =~&\E\{Y(\bz,m)\mid M(\bz)=m,U\}=\E\{Y(\bz,m)\mid U\}.
  \end{aligned}
\end{equation*}
Consequently,
\begin{equation*}
  \begin{aligned}
    \beta^s&=\frac{1}{m-m^*}\big[\E\{Y(\bz,m)\mid M(\bz)=m,U\}\\&~~~~~~~
     -\E\{Y(\bz,m^*)\mid M(\bz)=m^*,U\}\big]\\
    &=\frac{1}{m-m^*}\big[\E\{Y(\bz,m)\mid U\}-\E\{Y(\bz,m^*)\mid U\}\big]
    \\&=\beta^c.
  \end{aligned}
  \end{equation*}
%  \begin{equation*}
%    \begin{aligned}
%    &=\beta^c.
%  \end{aligned}
%\end{equation*}
In addition, it should also be noted that based on the previous results, we can also show the equality of parameter $g_Y^s(\cdot)$ with parameter $g_Y^c(\cdot)$ as follows:
\begin{equation*}
  \begin{aligned}
    g_Y^s(\bz) &=\E\{Y(\bz,m)\mid U\}-\beta^s m-h(U)=g_Y^c(\bz).
  \end{aligned}
\end{equation*}
Until now, we have shown the equivalence between parameters of the SEMs in \eqref{eqn:SEM} with the corresponding parameters of the causal models in \eqref{eqn:causalmodel}.
Since the SEMs include the unobserved variable $U$,
the parameters are still unidentifiable without additional conditions.
\subsection{Identification of parameters}

In this subsection, we give conditions under which the parameters of the SEMs are identifiable from observed data. Apparently, $g_M^s(\cdot)$ is identifiable due to the condition that $\E(U+\epsilon\mid \bZ)=0$ and can be written as follows:
\begin{equation*}
  g_M^s(\bz) = \E(M\mid \bZ=\bz).
\end{equation*}
In order to guarantee the identifiability of $\beta^s$ and $g_Y^s(\cdot)$, we impose the following condition.

\begin{condition}\label{cond:identification}
$\mathscr{G}_M^s:=\{g_M^s(\cdot)\}$ is a function space with finite dimension, and the space $\mathscr{G}_Y^s:=\{g_Y^s(\cdot)\}$ is a proper subspace of $\mathscr{G}_M^s$.
\end{condition}

%Before proceeding further, we give some explanations about Condition \ref{cond:identification}.
The condition 1 implies that the number of the basis functions of $\mathscr{G}_M^s$ is finite and a proper subset of these basis functions generates the subspace $\mathscr{G}_Y^s$. This may be not a stringent condition and can be satisfied in a variety of cases. For example, suppose that $g_M^s(\bz)$ is a polynomial function of degree 2 in each component of $\bz$. Then, a linear function $g_Y^s(\bz)$ of the components satisfies Condition \ref{cond:identification}.

\begin{theorem}\label{thm:identification}
  Given the specified DAG in Figure~\ref{fig:dag} and models in \eqref{eqn:SEM},
  the parameters $\beta^s$ and $g_Y^s(\cdot)$ are identifiable under Condition \ref{cond:identification}.
\end{theorem}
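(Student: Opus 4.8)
The plan is to eliminate the latent confounder $U$ by conditioning on the treatments $\bZ$, thereby collapsing the outcome model into a single moment equation involving only observable quantities, and then to invoke Condition~\ref{cond:identification} to pin down the decomposition of that equation uniquely. I treat identification as the claim that the model constraints admit a unique pair $(\beta^s,g_Y^s)$ consistent with the observed law of $(\bZ,M,Y)$; existence is immediate because the true parameters satisfy every equation, so only uniqueness has to be argued.

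First I would take the conditional expectation of the outcome equation in~\eqref{eqn:SEM} given $\bZ=\bz$. Because the DAG in Figure~\ref{fig:dag} makes $U$ a source node whose only children are $M$ and $Y$, we have $U\indep\bZ$, and combined with $\E\{h(U)\}=0$ this yields $\E\{h(U)\mid\bZ=\bz\}=0$; the assumption $\E(\eta\mid\bZ,M,U)=0$ gives $\E(\eta\mid\bZ=\bz)=0$ by the tower property. Hence
\begin{equation*}
  \E(Y\mid\bZ=\bz)=g_Y^s(\bz)+\beta^s\,\E(M\mid\bZ=\bz)=g_Y^s(\bz)+\beta^s\,g_M^s(\bz),
\end{equation*}
where I have used the already-established identity $g_M^s(\bz)=\E(M\mid\bZ=\bz)$. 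Writing $r(\bz):=\E(Y\mid\bZ=\bz)$, both $r(\cdot)$ and $g_M^s(\cdot)$ are functionals of the observed distribution and are therefore identifiable, so the task reduces to recovering the scalar $\beta^s$ and the function $g_Y^s$ from the single relation $r=g_Y^s+\beta^s g_M^s$.

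The crux is the uniqueness of this decomposition, and this is exactly where Condition~\ref{cond:identification} enters. Suppose $(\beta^s,g_Y^s)$ and $(\tilde\beta,\tilde g_Y)$ are two admissible solutions, so $g_Y^s,\tilde g_Y\in\mathscr{G}_Y^s$. Subtracting the two representations of $r$ gives $(\beta^s-\tilde\beta)\,g_M^s=\tilde g_Y-g_Y^s$, whose right-hand side lies in $\mathscr{G}_Y^s$ because $\mathscr{G}_Y^s$ is a linear space closed under differences. If $\beta^s\neq\tilde\beta$, dividing by the nonzero scalar forces $g_M^s\in\mathscr{G}_Y^s$; but Condition~\ref{cond:identification} makes $\mathscr{G}_Y^s$ a \emph{proper} subspace of the space $\mathscr{G}_M^s$ to which $g_M^s$ belongs, so $g_M^s$ carries a nonzero component along a basis direction absent from $\mathscr{G}_Y^s$ and cannot lie in $\mathscr{G}_Y^s$. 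This contradiction forces $\beta^s=\tilde\beta$, whence $\tilde g_Y=g_Y^s$. Thus $\beta^s$ is identified, and $g_Y^s=r-\beta^s g_M^s$ is identified as a by-product.

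I expect the main obstacle to be the first step rather than the algebra of the second: one must argue carefully that \emph{both} $\E\{h(U)\mid\bZ\}$ and $\E(\eta\mid\bZ)$ vanish, since any residual dependence of $U$ on $\bZ$ would leave an unidentified bias term inside $r(\bz)$ and destroy the clean two-term decomposition on which everything else rests. The finite-dimensionality in Condition~\ref{cond:identification} is what guarantees that the ``extra direction'' argument is not vacuous, making the proper-subspace hypothesis the precise mechanism that separates the confounded slope $\beta^s$ from the direct-effect function $g_Y^s$.
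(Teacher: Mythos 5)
Your proposal is correct and is essentially the paper's own argument in coordinate-free form: the paper substitutes the $M$-equation into the $Y$-equation and regresses $Y$ on the basis $\phi_1,\ldots,\phi_L$ of $\mathscr{G}_M^s$, which is exactly your moment equation $\E(Y\mid \bZ)=g_Y^s(\bZ)+\beta^s g_M^s(\bZ)$ written in coordinates, and both proofs then invoke the proper-subspace structure of Condition~\ref{cond:identification} to disentangle $\beta^s g_M^s$ from $g_Y^s$. One shared caveat worth noting: both arguments require the true $g_M^s$ to have a nonzero component outside $\mathscr{G}_Y^s$ --- in the paper this is the unstated need for some $\alpha_l\neq 0$ with $l>L_1$ when recovering $\beta^s$ from the products $\beta^s\alpha_l$, and in your write-up it is the assertion that $g_M^s$ ``cannot lie in $\mathscr{G}_Y^s$,'' which does not literally follow from the condition (a particular element of $\mathscr{G}_M^s$ may well lie in the subspace); this is the same silent strengthening the paper makes, so it is not a gap relative to the paper's proof.
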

\begin{proof}
  Suppose that the functions $\phi_1(\bz),\ldots,\phi_L(\bz)$ are composed of the basis functions of $\mathscr{G}_M^s$. Since $\mathscr{G}_Y^s$ is a proper subset of $\mathscr{G}_M^s$, we assume without loss of generality that the basis functions of $\mathscr{G}_Y^s$ are $\phi_1(\bz),\ldots,\phi_{L_1}(\bz)$, where $L_1<L$.

  By definition, there exist two sequences of real numbers $\{\alpha_l\}_{l=1}^L$ and $\{\gamma_l\}_{l=1}^{L_1}$ such that
  \begin{equation}\label{eqn:linearcombination}
  \begin{aligned}
    g_M^s(\bz)=\sum_{l=1}^L\alpha_l \phi_l(\bz),~~
    \text{and}~~
    g_Y^s(\bz)=\sum_{l=1}^{L_1}\gamma_l \phi_l(\bz).
    \end{aligned}
  \end{equation}
  As discussed earlier, $g_M^s(\bz)$ is identifiable, which  implies that the parameters $\{\alpha_l\}_{l=1}^L$ are also identifiable. Substituting the equation for $M$ into the equation for $Y$ in \eqref{eqn:SEM}, and replacing the expressions of $g_M^s(\bz)$ and $g_Y^s(\bz)$ with corresponding linear combinations as shown above, we obtain
  \begin{equation}\label{eqn:anothermodelforY}
    \begin{aligned}
      Y=&\sum_{l=1}^{L_1}(\gamma_l+\beta^s\alpha_l)\phi_l(\bZ)+
      \beta^s\sum_{L_1+1}^L\alpha_l\phi_l(\bZ)\\&~~~+\beta^sU
      +h(U)+\beta^s\varepsilon+\eta.
    \end{aligned}
  \end{equation}
  Because $\E\{\beta^sU+h(U)+\beta^s\varepsilon+\eta\mid \bZ\}=0$ and $\phi_1(\bz),\ldots,\phi_L(\bz)$ are linearly independent, we find that $\gamma_l+\beta^s\alpha_l$ ($l=1,\ldots,L_1$) and $\beta^s\alpha_l$ ($l=L_1+1,\ldots,L$) are identifiable. This, combined with the identifiability of $\{\alpha_l\}_{l=1}^L$, implies that $\beta^s$ and $\{\gamma_l\}_{l=1}^{L_1}$ are also identifiable. Thus, $g_Y^s(\bz)$ is identifiable.
\end{proof}

Based on the discussions about the equality between the parameters of the SEMs and the analogous parameters of the causal models in the previous subsection, we conclude from Theorem \ref{thm:identification} that
the parameters $g_M^c(\cdot)$, $g_Y^c(\cdot)$ and $\beta^c$ are also identifiable, which in turn implies the identifiability of the average natural direct, indirect and total causal effects of $Z_j$ on $Y$ according to \eqref{eqn:causaleffectsofinterest}.

\subsection{Estimation and inference}

In this subsection, we provide an approach for estimating the parameters of the SEMs in \eqref{eqn:SEM}. We also discuss the resampling-based procedures for inference.

%The proposed approach is based on the proof of Theorem \ref{thm:identification}.
Given the known basis functions $\phi_1(\bz),\ldots,\phi_L(\bz)$ of $\mathscr{G}_M^s$, we can express $g_M^s(\bz)$ and $g_Y^s(\bz)$ as linear combinations of them, which have been shown in \eqref{eqn:linearcombination}. Denote $\bm{\Phi}(\bz)=(\phi_1(\bz),\ldots,\phi_L(\bz))$, $\bm{\Phi}_1(\bz)=(\phi_1(\bz),\ldots,\phi_{L_1}(\bz))$,
$\bm{\alpha}=(\alpha_1,\ldots,\alpha_L)^\top$, $\bm{\gamma}=(\gamma_1,\ldots,\gamma_{L_1})^\top$. We then rewrite $g_M^s(\bz)$ and $g_Y^s(\bz)$ as
\begin{equation*}
  g_M^s(\bz) = \bm{\Phi}(\bz)\bm{\alpha}, ~~\text{and}~~ g_Y^s(\bz)=\bm{\Phi}_1(\bz)\bm{\gamma}.
\end{equation*}
By the least-square criterion,
the coefficient $\bm{\alpha}$ of the expansion of $g_M^s(\bz)$ can be determined by minimizing
\begin{equation*}
  \hat{\bm{\alpha}}=\argmin_{\bm{\alpha}} \sum_{i=1}^n \big\{M_i-\bm{\Phi}(\bZ_i)\bm{\alpha}\big\}^2.
\end{equation*}
Then the solution is given by
\begin{equation*}
  \hat{\bm{\alpha}}=\bigg\{\sum_{i=1}^n\bm{\Phi}(\bZ_i)^\top\bm{\Phi}(\bZ_i)\bigg\}^{-1}
  \bigg\{\sum_{i=1}^n\bm{\Phi}(\bZ_i)^\top M_i\bigg\},
\end{equation*}
which provides us an estimate of the coefficients of $g_M^s(\bz)$.

To derive estimates of $\beta^s$ and $\bm{\gamma}$, we utilize the idea of the generalized method of moments (GMM) \cite{hall2005generalized} and introduce more notations. Let $\bm{\alpha}_1=(\alpha_1,\ldots,\alpha_{L_1})^\top$, and the subvector being composed of the
remaining components in $\bm{\alpha}$ is denoted by $\bm{\alpha}_2$. Let $\bm{\Phi}_2(\bz)$ represent the subvector of $\bm{\Phi}(\bz)$ with components not contained in $\bm{\Phi}_1(\bz)$, i.e., $\bm{\Phi}(\bz)=\{\bm{\Phi}_1(\bz),\bm{\Phi}_2(\bz)\}$. Then, we
denote $\bm{Y}=(Y_1,\ldots,Y_n)^\top$, $\bm{\Phi}=\{\bm{\Phi}(\bZ_1)^\top,\ldots,\bm{\Phi}(\bZ_n)^\top\}^\top$, $\bm{\Phi}_1=\{\bm{\Phi}_1(\bZ_1)^\top,\ldots,\bm{\Phi}_1(\bZ_n)^\top\}^\top$, and $\bm{\Phi}_2=\{\bm{\Phi}_2(\bZ_1)^\top,\ldots,\bm{\Phi}_2(\bZ_n)^\top\}^\top$.
 For notational simplicity, we also let $\bm{\delta}=\bm{\gamma}+\beta^s\bm{\alpha_1}$. Using these notations, we now rewrite \eqref{eqn:anothermodelforY} as
\begin{equation*}
  Y = \bm{\Phi}_1(\bZ)\bm{\delta}+\beta^s\bm{\Phi}_2(\bZ)\bm{\alpha}_2+
  \beta^sU+h(U)+\beta^s\varepsilon+\eta.
\end{equation*}
In view of $\E\{\beta^sU+h(U)+\beta^s\varepsilon+\eta\mid \bZ\}=0$, it follows immediately that
\begin{equation}\label{eqn:residualexpectation}
  \E\big[\Phi(\bZ)^\top\{Y-\bm{\Phi}_1(\bZ)\bm{\delta}-\beta^s\bm{\Phi}_2(\bZ)\bm{\alpha}_2\}\big]
  =\bm{0}.
\end{equation}
We then estimate $(\bm{\delta}^\top,\beta^s)$ by minimizing the sum of the squares of sample analogues of \eqref{eqn:residualexpectation}
\begin{equation*}
\begin{aligned}
 &(\hat{\bm{\delta}}^\top,\hat{\beta^s})
  =\argmin_{(\bm{\delta}^\top,\beta^s)}(\bm{Y}-\bm{\Phi}_1
  \bm{\delta}-\beta^s\bm{\Phi}_2\bm{\alpha}_2)^\top\bm{\Phi}\bm{\Phi}^\top
 \\&~~~\qquad\qquad\qquad\qquad~~(\bm{Y}-\bm{\Phi}_1
  \bm{\delta}-\beta^s\bm{\Phi}_2\bm{\alpha}_2).
\end{aligned}
\end{equation*}
After solving the above optimization problem and substituting the estimator $\hat{\bm{\alpha}}_2$ into the solutions, we can easily obtain the estimates of $(\bm{\delta}^\top,\beta^s)$ with explicit forms. However, due to their complex expressions, we omit displaying them here. Let $\hat{\bm{\gamma}}=\hat{\bm{\delta}} - \hat{\beta^s}\hat{\bm{\alpha}}_1$, which gives us an estimate of $\bm{\gamma}$. In addition, all these estimators $\hat{\bm{\alpha}}$, $\hat{\beta^s}$ and $\hat{\bm{\gamma}}$ are consistent. Under some regularity conditions, they are also asymptotically normal. Substituting these estimators into \eqref{eqn:causaleffectsofinterest} and taking averages over samples, we obtain consistent estimators of the average causal effects of interest. Furthermore, by the Delta method, these consistent estimators are also asymptotically normal. Since analytical calculations of the asymptotic variances are difficult,
 %seem rather complex, tedious and unnecessary,
 we use a nonparametric bootstrap method to conduct inference.

\section{Simulation studies and application}\label{sec:experiments}

In this section, we first conduct simulation studies to evaluate the performance of the proposed estimators in finite samples. We then apply the proposed approach to a real customer loyalty data set.

\subsection{Simulation studies}

We consider two different simulation studies where data are generated by using only one and multiple (more than one) available instrumental variables, respectively. Each of the simulations is repeated 1000 times under different sample sizes $n=$ 500, 1000, and 2000. We
 report the results of estimated causal effects of interest by averaging over the 1000 replications.

In both of the simulation studies, we consider three treatment variables: $Z_1$, $Z_2$, and $Z_3$. Each of them is uniformly generated from $\{1, 2, 3\}$ with equal probability. The latent confounder $U$ is generated from $N(0, 1)$.
The mediator $M$ is generated from the following equation:
\begin{equation*}
  \begin{aligned}
    M =& Z_1+Z_2+Z_3+Z_1*Z_2 +Z_1*Z_3
    \\&~~+Z_2*Z_3
    +U+\epsilon_M,
  \end{aligned}
\end{equation*}
where $\epsilon_M\sim N(0,1)$. The only difference between the two simulation studies is the generation of the outcome $Y$. The details for the generation of $Y$ are described as follows:

\paragraph{ Simulation Study 1.} The outcome $Y$ is generated from
\begin{equation*}
  \begin{aligned}
    Y =& Z_1 + Z_2 + Z_3 + M + Z_1 * Z_2\\
    &~~+ Z_1 * Z_3
     + 2 * U + \epsilon_Y,
  \end{aligned}
\end{equation*}

where $\epsilon_Y\sim N(0,1)$, and is independent of $\epsilon_M$. Note that
$Z2*Z3$ is independent of $U$, associated with $M$, and has no direct effect  on $Y$ except through $M$. Thus, for this simulation, $Z2*Z3$ can be viewed as an instrumental variable.

We use both the proposed approach and  the traditional regression-based approach without considering the latent confounder $U$ for estimation.
To compare the results obtained from these two approaches,
we report the bias and standard error (SE) for estimators of  $NDE_j(2,1)$, $NIE_j(2,1)$ and $TE_j(2,1)$  based on 1000 replications for the sample sizes $n=500$, 1000, and 2000, respectively.
Here, the subscript `$j$' indicates the causal effects of the $j$th treatment variable and $j=1,2,3$.
Table~\ref{tab:sim1} displays the simulation results.

%\begin{center}

From Table~\ref{tab:sim1}, we can see that the estimates by our approach all have negligible biases for the small sample size 500. As sample size increases, both the biases and standard errors become much smaller.
In contrast, the estimates of the average natural direct and indirect effects obtained by the traditional approach
all have quite large biases even for large sample sizes.
It is because the traditional approach ignores the latent confounder $U$ which confounds the mediator-outcome
 relationship.
However, since  $U$ does not affect the relationship between treatments and outcome variables, the performance of the traditional approach for estimations on the average total effects behaves much better.
\begin{table}[h!]
\centering
%\small
\resizebox{0.48\textwidth}{!}{
\begin{threeparttable}
\def~{\hphantom{0}}
%\vspace{-2mm}
\caption{Results for the proposed approach and traditional approach with different sample sizes in {\bf Simulation Study 1}.} \label{tab:sim1}
\centering
\begin{tabular}{ccrrcrr}
  \hline
  & \multirow{ 2}{*}{$~n$} & \multicolumn{2}{c}{Proposed } & &\multicolumn{2}{c}{Traditional } \vspace{0.6mm}\\
  \cline{3-4}
  \cline{6-7}
  \addlinespace[0.9mm]
  & & Bias &SE  & &Bias &SE  \\
  \hline
  \addlinespace[1mm]
$Z_1$ &&&&&&\\
\cline{1-1}
  \multirow{ 3}{*}{$\textnormal{NDE}_1(2,1)$ = 5} &~500&0.057 &0.643 &  &-3.724 &0.280\\
  %{$NDE=0.329$}
  &1000 &0.010 & 0.464
   & &-3.719&0.189\\
  &2000 &0.008 & 0.326 & &-3.720 &0.133 \\ %\hline
  \addlinespace[1mm]
  %\hline
  \multirow{ 3}{*}{$\textnormal{NIE}_1(2,1)$ = 5}&~500&-0.062 &0.649
  & &3.820 &0.297 \\
  %{$NIE=0.072$}
  &1000 &-0.014 &0.479
  & &3.815 &0.208  \\
  &2000 &-0.005 &0.335
  & &3.824 &0.149 \\ %\hline
  \addlinespace[1mm]
  \multirow{ 3}{*}{$\textnormal{TE}_1(2,1)$ = 10}&~500 &-0.006 &0.210
  & &0.096 &0.212\\
  %{$~TE=0.401 $}
  &1000 &-0.004 &0.146
  & &0.095 &0.148 \\
  &2000 &0.004 &0.101   & &0.104 &0.101\\
  \addlinespace[1mm]
  \hline
  \addlinespace[1mm]
  $Z_2$&&&&&&\\
  \cline{1-1}
\multirow{ 3}{*}{$\textnormal{NDE}_2(2,1)$ = 3}&~500 &0.067&0.632
& &-3.718 &0.284\\
%{$NDE=0.309$}
  &1000 &0.012 &0.467
  & &-3.720 &0.188 \\
  &2000 &0.003 &0.324 &&-3.723 &0.130\\
  \addlinespace[1mm]
  %\hline
  \multirow{ 3}{*}{$\textnormal{NIE}_2(2,1)$ = 5} &~500 &-0.057 &0.653
  & &3.830&0.306 \\
  %{$NIE=0.067$}
  &1000 &-0.010 &0.479
  & &3.815 &0.218 \\
  &2000 &-0.008 &0.334
  &&3.819 &0.144\\
  \addlinespace[1mm]
  \multirow{ 3}{*}{$\textnormal{TE}_2(2,1)$ = 8}&~500 &0.011 &0.204
  & &0.111 &0.203\\
  %{$~TE=0.401 $}
  &1000 &0.002 &0.138
  & &0.103 &0.138\\
  &2000 &-0.004 &0.096   & &0.097 &0.095\\
   \addlinespace[1mm]
  \hline
  \addlinespace[1mm]
  $Z_3$&&&&&&\\
  \cline{1-1}
  \multirow{ 3}{*}{$\textnormal{NDE}_3(2,1)$ = 3}&~500 &-0.052 &0.636
& &-3.728 &0.272\\
%{$NDE=0.309$}
  &1000 &0.010 &0.466
  & &-3.718 &0.191 \\
  &2000 &0.009 &0.327 &&-3.717 &0.133\\
  \addlinespace[1mm]
  %\hline
  \multirow{ 3}{*}{$\textnormal{NIE}_3(2,1)$ = 5} &~500 &-0.064 &0.649
  & &3.818&0.300 \\
  %{$NIE=0.067$}
  &1000 &-0.014 &0.478
  & &3.815 &0.216 \\
  &2000 &-0.007 &0.334
  &&3.820 &0.149\\
  \addlinespace[1mm]
  \multirow{ 3}{*}{$\textnormal{TE}_3(2,1)$ = 8}&~500 &-0.012 &0.190
  & &0.090 &0.189\\
  %{$~TE=0.401 $}
  &1000 &-0.004 &0.138
  & &0.097 &0.139 \\
  &2000 &0.002 &0.097  & &0.103 &0.096\\
  \hline
\end{tabular}
\end{threeparttable}}
\end{table}

\paragraph{ Simulation Study 2.} In this simulation study, we generate the outcome $Y$ from
\begin{equation*}
  \begin{aligned}
    Y = Z_1+Z_2+Z_3+M+2*U+\epsilon_Y,
  \end{aligned}
\end{equation*}
where $\epsilon_Y\sim N(0,1)$ and it is independent of $\epsilon_M$.
For this simulation, $Z_1*Z_2$, $Z_1*Z_3$ and $Z_2*Z_3$  can be
treated as instrumental variables.
We use both the proposed approach and the traditional approach for estimation, and the corresponding results are shown
  in Table~\ref{tab:sim2}.
The results are similar to those in  Table~\ref{tab:sim1},
%\vspace{-0.6cm}
 both of which support the consistency results of the proposed estimators and demonstrate the advantage of the proposed approach over the traditional one.
%\begin{center}

\begin{table}[h!]
%\small
\centering
\resizebox{0.48\textwidth}{!}{
\begin{threeparttable}
\def~{\hphantom{0}}
%\vspace{-2mm}
\caption{Results for the proposed approach and traditional approach with different sample sizes in {\bf Simulation Study 2}.} \label{tab:sim2}
\centering
\begin{tabular}{lcrrcrr}
  \hline
 & \multirow{ 2}{*}{$~n$} & \multicolumn{2}{c}{Proposed} & &\multicolumn{2}{c}{Traditional} \vspace{0.6mm}\\
  \cline{3-4}
  \cline{6-7}
  \addlinespace[0.9mm]
  & & Bias &SE  & &Bias &SE  \\
  \hline
%  \multirow{ 3}{*}{$C0=0.305$} &200 &0.060 &0.117 &0.158 &0.875 & &0.138 &0.088 &0.095 & 0.639\\
%  &500 &0.029 &0.093 &0.121 &0.894 & &0.150 &0.054 &0.055 & 0.233\\
%  &3000 &-0.004 &0.054 &0.058 &0.954 & &0.152 &0.021 &0.022 & 0.000\\ \hline
%  \multirow{ 3}{*}{$C1=0.321$} &200 &0.020 &0.091& 0.117 &0.935 & &0.087 &0.124 &0.110 &0.919 \\
%  &500 &0.009 &0.064& 0.084 &0.937 & &0.062 &0.062 &0.057 &0.810 \\
%  &3000 &-0.004 &0.037& 0.039 &0.954 & &0.057 &0.022 &0.022 &0.246 \\ \hline
%  %\hline
%Binary $Y$ & & & & & & & & & & & & & & &\\
%\cline{1-1}
\addlinespace[1mm]
 %\addlinespace[1mm]
  %\hline
  %\addlinespace[1mm]
  $Z_1$&&&&&&\\
  \cline{1-1}
  \multirow{ 3}{*}{$\textnormal{NDE}_1(2,1)$ = 1} &~500&0.001 &0.154 &  &-0.572 &0.134\\
  %{$NDE=0.329$}
  &1000 &0.000 & 0.107
   & &-0.569&0.096\\
  &2000 &-0.000 & 0.073 & &-0.570 &0.065 \\ %\hline
  \addlinespace[1mm]
  %\hline
  \multirow{ 3}{*}{$\textnormal{NIE}_1(2,1)$ = 5}&~500&-0.002 &0.213
  & &0.918 &0.188 \\
  %{$NIE=0.072$}
  &1000 &-0.003 &0.151
  & &0.915 &0.138  \\
  &2000 &0.004 &0.106
  & &0.921 &0.095 \\ %\hline
  \addlinespace[1mm]
  \multirow{ 3}{*}{$\textnormal{TE}_1(2,1)$ = 6}&~500 &-0.001 &0.213
  & &0.346 &0.188\\
  %{$~TE=0.401 $}
  &1000 &-0.003 &0.134
  & &0.346 &0.129 \\
  &2000 &0.004 &0.093   & &0.351 &0.089\\
  \addlinespace[1mm]
  \hline
  \addlinespace[1mm]
  $Z_2$&&&&&&\\
  \cline{1-1}
\multirow{ 3}{*}{$\textnormal{NDE}_2(2,1)$ = 1}&~500 &0.006&0.150
& &-0.565 &0.133\\
%{$NDE=0.309$}
  &1000 &0.002 &0.107
  & &-0.570 &0.095 \\
  &2000 &-0.004 &0.075 &&-0.574 &0.066\\
  \addlinespace[1mm]
  %\hline
  \multirow{ 3}{*}{$\textnormal{NIE}_2(2,1)$ = 5} &~500 &0.004 &0.216
  & &0.924&0.193 \\
  %{$NIE=0.067$}
  &1000 &0.001 &0.149
  & &0.920 &0.139 \\
  &2000 &0.001 &0.105
  &&0.918 &0.092\\
  \addlinespace[1mm]
  \multirow{ 3}{*}{$\textnormal{TE}_2(2,1)$ = 6}&~500 &0.010 &0.197
  & &0.359 &0.187\\
  %{$~TE=0.401 $}
  &1000 &0.003 &0.134
  & &0.350 &0.129\\
  &2000 &-0.003 &0.092   & &0.344 &0.087\\
   \addlinespace[1mm]
  \hline
  \addlinespace[1mm]
  $Z_3$&&&&&&\\
  \cline{1-1}
  \multirow{ 3}{*}{$\textnormal{NDE}_3(2,1)$ = 1}&~500 &0.006 &0.150
& &-0.565 &0.133\\
%{$NDE=0.309$}
  &1000 &0.002 &0.107
  & &-0.570 &0.095 \\
  &2000 &-0.004 &0.075 &&-0.574 &0.066\\
  \addlinespace[1mm]
  %\hline
  \multirow{ 3}{*}{$\textnormal{NIE}_3(2,1)$ = 5} &~500 &-0.004 &0.209
  & &0.916&0.187 \\
  %{$NIE=0.067$}
  &1000 &-0.003 &0.149
  & &0.915 &0.138 \\
  &2000 &0.001 &0.106
  &&0.919 &0.095\\
  \addlinespace[1mm]
  \multirow{ 3}{*}{$\textnormal{TE}_3(2,1)$ = 6}&~500 &-0.011 &0.182
  & &0.337 &0.176\\
  %{$~TE=0.401 $}
  &1000 &-0.003 &0.133
  & &0.346 &0.128 \\
  &2000 &0.004 &0.093  & &0.351 &0.089\\
  \hline
  %\footnotemark\\
\end{tabular}
\label{tab:sim2}
%\begin{tablenotes}
%%\vspace{0.8cm}
%\small
%\item  $n$: sample size;\, MCSE: Monte Carlo standard error;\, CP: 95\% coverage probability;\, $\textnormal{CDE}_0$: $CDE_0(1,0)$;\, $\textnormal{CDE}_1$: $CDE_1(1,0)$;\, NDE: $NDE(1,0)$;\, NIE: $NIE(1,0)$;\, TE: $TE(1,0)$.
%%\hspace{1mm}\raggedright{NDE, $NDE(1,0)$;\, NIE, $NIE(1,0)$;\, TE, $TE(1,0)$.}
%\end{tablenotes}
%\myfootnotetext{NDE, $NDE(1,0)$;\, NIE, $NIE(1,0)$;\, TE, $TE(1,0)$.}
\end{threeparttable}}
\end{table}

\subsection{Application to real data}

In this section, we apply the proposed approach to a real customer loyalty data set from a telecom company in China.
The customer loyalty analysis tries to discover key factors affecting loyalty and the affecting process, with which
to choose proper actions to maintain customers.
In this study, 18553 randomly chosen customers answer questionnaire via personal interview or an online survey.
We drop the incompletely observed participants, leaving a total of 9833 participants.
In the questionnaire, customers scored their satisfaction about 13 different specific factors, respectively, such as network quality, tariff plan, voice quality, service quality, and so on. The scores are integers ranging from 1 to 10, with 1 denoting ``not satisfied at all'' and 10 denoting ``satisfied''. For the simplicity, we choose two important treatment factors, network quality and tariff plan, as an example to illustrate our approach, which are denoted by $Z_1$ and $Z_2$, respectively.
For each customer, a score indicating the loyalty to the company has also been obtained, and we use the score as the outcome variable $Y$.
Investigators also collected the information of each customer's  general satisfaction about the company, and it is used as the mediator variable $M$. We aim to assess whether the treatment factors have significant effects on improving customers' loyalty. In particular, we wish to figure out whether the effects of the factors on loyalty are mediated by the general satisfaction about the company.

We use the proposed approach for evaluation,
and  also consider the traditional approach without accounting for possible latent confounders for comparison.
For both approaches, we calculate estimators of  the average natural direct effect
 $NDE_j(2,1)$, the average natural indirect effect $NIE_j(2,1)$,
 and the average total effect $TE_j(2,1)$, and also their 95\% confidence intervals.  Of note, the average total causal effect $TE_j(2,1)$ is estimated by the sum of the estimated average natural direct effect $NDE_j(2,1)$ and indirect effect $NIE_j(2,1)$.
Subscript `$j$' indicates the corresponding causal effects of network quality and tariff plan for $j=1,2$.
We report these results in Table \ref{example:tab:results}.
%\begin{center}

\begin{table}[htbp]
\centering
%\tiny
\resizebox{0.48\textwidth}{!}{
\begin{threeparttable}
\def~{\hphantom{0}}
\caption{Results for the proposed approach and traditional approach on the analysis of the real customer loyalty data, respectively.
}
\centering
\begin{tabular}{ccccccc}
%\addlinespace[1mm]
  \hline
  \addlinespace[1.5mm]
  \multirow{ 2}{*}{}&  \multicolumn{2}{c}{Proposed} &  &\multicolumn{2}{c}{Traditional } \\
  \addlinespace[1.5mm]
   \cline{2-3}
  \cline{5-6}
  \addlinespace[2mm]
   &  Estimate & 95\% CI  &   & Estimate & 95\% CI \\
      \addlinespace[2mm]
  \hline
   \addlinespace[1mm]
   Network quality & & & & &\\
   \addlinespace[1mm]
   \cline{1-1}
   \addlinespace[1.5mm]
   $\widehat{NDE_1(2,1)}$  &0.426  &(0.392, 0.458)  &  & 0.267 & (0.244, 0.291)  \\
   %\hline
   \addlinespace[2mm]
   $\widehat{NIE_1(2,1)}$  &0.039  &(0.028, 0.050)  & & 0.101 &(0.089, 0.113)  \\
      \addlinespace[2mm]
   $\widehat{TE_1(2,1)}$  &0.465 & (0.437, 0.492) &  & 0.369 &(0.345, 0.391)  \\
   \addlinespace[1.5mm]
   \hline
    \addlinespace[1mm]
   Tariff plan & & & & &\\
   \addlinespace[1mm]
   \cline{1-1}
   \addlinespace[1.5mm]
   $\widehat{NDE_2(2,1)}$  &0.401  &(0.373, 0.430)  &  & 0.271 &(0.250, 0.293)  \\
      \addlinespace[2mm]
   $\widehat{NIE_2(2,1)}$  &0.042 & (0.031, 0.053) &  & 0.110 &(0.099, 0.120)  \\
   \addlinespace[2mm]
   $\widehat{TE_2(2,1)}$  &0.443 & (0.419, 0.466) &  & 0.381 &(0.361, 0.403)  \\
   \addlinespace[1.5mm]
   \hline
 \end{tabular}
\label{example:tab:results}
\begin{tablenotes}
%\vspace{0.8cm}
\small
\item CI: confidence interval.
%\hspace{1mm}\raggedright{NDE, $NDE(1,0)$;\, NIE, $NIE(1,0)$;\, TE, $TE(1,0)$.}
\end{tablenotes}
\end{threeparttable}}
\end{table}

%\end{center}
From Table~\ref{example:tab:results}, we can see that both the proposed approach and the traditional approach indicate positive and statistically significant causal effects of network quality and tariff plan on loyalty.
Since there are no latent confounders between the treatments (network quality and tariff plan) and outcome variable (loyalty)  by the randomization of this study, a simple linear regression of the outcome against the treatments can induce good estimates of the average total causal effects.
These results are exactly the same as the corresponding ones
obtained from the proposed approach,  but different from those obtained from the traditional approach.
From this point of view, the results for our proposed approach in Table~\ref{example:tab:results} are more reliable, compared with those obtained from the traditional approach which does not consider possible latent confounders between the general satisfaction and loyalty.
According to the results,
both the estimated average natural direct effects of network quality and tariff plan on loyalty are positive but larger than the corresponding indirect effects through the mediator variable (the general satisfaction about the company). This means that the causal effects of satisfaction about network quality and tariff plan on loyalty are mostly operated in a direct way, but only a small fraction of them are intermediated by the general satisfaction about the company.
Note also that the estimated total causal effect of network quality is a bit larger than that of tariff plan, which to some extent implies that the network quality may play a more important role in affecting loyalty than tariff plan.

\section{Conclusion}\label{sec:conclusion}

In this article, we have proposed an approach for causal mediation analysis with multiple treatments and latent confounders between the mediator and outcome variables. We give the formal definitions of the causal mediation effects in the multiple-treatment setting.
For the identification of these causal effects, we first postulate a causal model for potential outcomes and express the causal effects as functions of parameters of the causal model. Then we show the equality between parameters of the causal models with analogous parameters of the more feasible SEMs which is built upon observed variables. Using the idea of instrumental variable methods, we
provide sufficient conditions for identifying parameters of the SEMs,
which in turn gives the identification of the causal effects of interest.
Finally, we also develop an effective approach for estimation and inference.

\bibliographystyle{aaai}
\bibliography{mybib}

\end{document}